\definecolor{webgreen}{rgb}{0,0.4,0}
\definecolor{webbrown}{rgb}{0.6,0,0}
\definecolor{purple}{rgb}{0.5,0,0.25}
\definecolor{darkblue}{rgb}{0,0,0.7}
\definecolor{darkred}{rgb}{0.7,0,0}
\definecolor{darkgreen}{rgb}{0,0.7,0}
\begin{document}
\begin{spacing}{1.5}

\title{Social learning via actions in bandit environments\footnote{I am thankful to Ian Ball, Stephen Morris, and Kartik Vira for detailed comments and feedback. I also thank Anne Carlstein and Nathaniel Hickok for insightful suggestions.}}

\author{Aroon Narayanan}

\date{}

\maketitle

\begin{abstract}
    I study a game of strategic exploration with private payoffs and public actions in a Bayesian bandit setting. In particular, I look at \textit{cascade} equilibria, in which agents switch over time from the risky action to the riskless action only when they become sufficiently pessimistic. I show that these equilibria exist under some conditions and establish their salient properties. Individual exploration in these equilibria can be more or less than the single-agent level depending on whether the agents start out with a common prior or not, but the most optimistic agent always underexplores. I also show that allowing the agents to write enforceable ex-ante contracts will lead to the most ex-ante optimistic agent to buy all payoff streams, providing an explanation to the buying out of smaller start-ups by more established firms. \\

    \noindent Keywords: strategic exploration, bandits, social learning, heterogenous priors \\
    \noindent JEL Classification: D83 \\
\end{abstract}

\newpage

\newtheorem{definition}{Definition}[]
\newtheorem{question}{Question}[]
\newtheorem{observation}{Observation}[]
\newtheorem{conjecture}{Conjecture}[]
\newtheorem{claim}{Claim}[]
\newtheorem{lemma}{Lemma}[]
\newtheorem{proposition}{Proposition}[]
\newtheorem{corollary}{Corollary}[]
\newtheorem{theorem}{Theorem}[]

\newcounter{example}[section]
\newenvironment{example}[1][]{\refstepcounter{example}\par\medskip
   \noindent \textbf{Example~\theexample. #1} \rmfamily}{\medskip}

\section{Introduction}

Many choices that economic agents face, be it a consumer deciding whether to try a new restaurant or a firm deciding whether to invest in a new technology, involve deciding between a risky and a safe option. The standard way to model this dichotomy is in the form of a \textit{bandit} problem, in which an agent faces the choice of choosing between two arms to pull, representing the two actions. Moreover, many of the environments that economic agents operate in are social - they explore new options at the same time as other agents who are doing the same. This is the setting I study, with the assumption that agents can see each others actions but not each others payoffs. For example, when deciding whether to try a new restaurant, a consumer may observe whether or not the restaurant is busy i.e. whether people keep returning to it, but probably not whether they enjoyed their meal (ignoring for this exercise the small subset that posts reviews online).

In particular, my interest is in an intuitive type of equilibria, which I term \textit{cascade equilibria}. In these, agents will keep taking the risky action until they become sufficiently pessimistic about its value, and then stop playing it forever, unless they observe someone else taking the risky action longer than they were supposed to. My main results revolve around establishing conditions under which such equilibria exist, and characterizing their salient properties.

The existence results bring out the crucial dilemma that agents face when exploring in a social context. If an agent take the risky action today, she can expect to get some immediate payoff and also some information about its worth, but at the risk of bearing some bad payoff in case the risky arm is actually detrimental. However, she could also take the safe action today and wait for the actions of her co-explorers to give her some indication about the right action to take. Cascade equilibria exist under conditions where the former outweighs the latter, leading agents to choose to explore rather than free-ride.

I also establish comparative statics for these equilibria. I allow for heterogenous priors and show that the most optimistic agent is always the last to stop exploring, but it need not always be the case that more optimistic agents explore more than less optimistic agents. I then compare exploration in cascade equilibria to the single-agent setting. If agents start out with a common prior, each agent explores less than what she would have, had she been exploring alone. With a heterogenous prior, however, agents may actually explore more than their single-agent level, in order to prevent more optimistic agents from becoming pessismistic and hence starting a cascade too soon, because they value the information that they can get from the exploration of others.

Finally, I show that allowing agents to write enforceable ex-ante contracts will lead to the most optimistic agent buying all of the payoff streams and then implementing her own efficient level of exploration. This phenomenon can be interpreted as the more established, better informed firm buying out or investing equity in smaller start-ups in the same space.

\textbf{Background and related literature}: The bandit problem, in particular the single-agent multi-arm bandit, has a vast literature in computer science and economics. In a Bayesian setting where one is comfortable assigning a prior to the payoff from each arm, the solution was derived first by \cite{gittens}. The agent assigns an \textit{index} to each arm every period, and pulls the arm with the highest index. When faced with one risky action and one safe action, the outcome under optimal play has a cutoff flavour - the agents chooses to take the risky action until some predetermined cutoff.

In a multi-agent setting, a third component of strategic benefit comes into play along with the immediate payoff and the option value of the risky action. This was first studied by \cite{bolton1999strategic}, who showed that a free-riding effect dominates in Markov equilibrium, leading agents to explore less in the social setting than they would have done had they been exploring alone. In their continuous-time model, both payoffs and actions are public information, which overloads the information benefit of the social setting - an agent can effectively depend on the exploration of others for her own benefit, introducing a free-rider problem. Moreover, the unique Markov equilibrium features complicated mixed strategies, rather than the simple cutoff strategies that were optimal in the single agent setting. However, \cite{horner2021} shows that this is an artefact of the solution concept rather than the problem itself, by showing that efficiency is in fact attainable under some conditions in strongly symmetric equilibria and the best equilibrium payoff is always achievable using cutoff strategies.

\cite{rosenberg2007social} study a discrete time setting with two players and private payoffs, imposing the restriction that playing the safe arm is irreversible, and find that all equilibria are cutoff equilibria. \cite{heidhues2015strategic} show that although Nash implementing the efficient outcome with common priors and observed payoffs is impossible even when the agents can communicate with teach other, it can be implemented under some conditions in sequential equilibrium when payoffs are unobserved and players can communicate with each other. They are silent on the case without communication.

Another strand of literature that my work closely relates to is the herding literature, started by \cite{banerjee1992simple} and \cite{bikhchandani1992theory}. The idea of information cascades is a running theme in this literature, with myopic agents ignoring their information when the information from predecessor actions are too valuable. The closest paper from this literature is \cite{mossel2015strategic}, who introduce forward-looking agents in a network setting and show that full learning can happen in the limit in \textit{egalitarian} networks.

\section{Model}

Each agent in a group of $n$ agents faces a choice between playing one of two actions, labelled $r$ (risky) and $s$ (safe), at time periods $t = 1,2...$. The state of the world, $\theta \in \{0,1\}$, is drawn before start of play and is common for all agents. The safe action gives a constant payoff of 0 always. The risky action's payoff each period is i.i.d. across players conditional on $\theta$ - it pays $X_h > 0$ with probability $\pi$ and pays $X_l < 0$ otherwise if $\theta = 1$, and pays $X_l < 0$ always if $\theta = 0$. Assume $E_1 = \mathbb{E}[X|\theta = 1] > 0$. Note that $E_2 = \mathbb{E}[X|\theta = 0] = X_l < 0 $. I informally say that an agent ``explores" when she plays the risky action.

Agent $i$ has prior $q_i$ about the state at time $t=1$. I denote $q_i(1)$ by $q_i$ for ease of notation i.e. $q_i$ is the probability that agent $i$ places on $\theta = 1$. I assume that the priors of the agents are commonly known. Wlog, I label the agents $1,...,n$ in decreasing order of $q_i$. The belief of agent $i$ about the state at time $t$ after history $h_t$ is written as $p_i(t, h_t)$. Agents observe the actions of other agents, but payoffs are private.

In this simple model, it is straightforward to calculate the cutoff belief at which a single agent exploring alone would stop exploring, by simply equating the benefit from exploring just one more period to the benefit from stopping forever:

\begin{equation*}
    p^a = \frac{(1-\delta)E_0}{(1-\delta)(E_1 + E_0) + \delta \pi E_1}
\end{equation*}

Starting at any prior $p$, this gives a time $\tau^a(p)$ until which the agent would explore. Similarly, given that the agent has $n$ risky actions to take, she stops at a cutoff:

\begin{equation*}
    p^e = \frac{(1-\delta)E_0}{(1-\delta)(E_1 + E_0) + \delta n \pi E_1}
\end{equation*}

This again corresponds to a time $\tau^e(p)$ until which the agent would explore. Note that $\tau^e(p)$ could be more or less than $\tau^a(p)$, since in the $n$ action case the agent gets $n$ draws per period instead of just $1$. But since $p^e < p^a$, the total number of draws that the agent takes in the $n$ action case will be more than that she takes when can take one action, so $n\tau^e(p) \geq \tau^a(p)$.

The focus of this paper will be on an intuitive class of equilibria, which I call \textit{cascade equilibria}. In such equilibria, on the equilibrium path, agents start out exploring and if an agent draws a success at any point, she plays the risky action forever. If however she draws only failures, then she will continue exploring till some pre-determined time, after which she will stop, unless she sees someone else explore past the pre-determined time that they were supposed to stop.

\begin{definition}
I call a weak perfect Bayesian equilibrium, $s$, a cascade equilibrium if $s$ induces the following ``cascade" outcome on the equilibrium path: there are cutoffs $(\tau_i)_{i \in N} \geq 1$ such that  
\begin{enumerate}
    \item agent $i$ plays the risky action until $t = \tau_i$
    \item conditional on all agents $j$ with cutoffs $\tau_j < \tau_i$ having already switched to playing the safe action, agent $i$ switches to playing the safe action at $t = \tau_i$ if and only if she has received $X_l$ every period that she played the risky action
    \item if some agent $j$ does not switch at her designated cutoff $\tau_j$, all agents play the risky action forever from $t = \tau_i + 1$
\end{enumerate}
\end{definition}

\begin{figure}
    \centering
    \includegraphics[width=.6\textwidth]{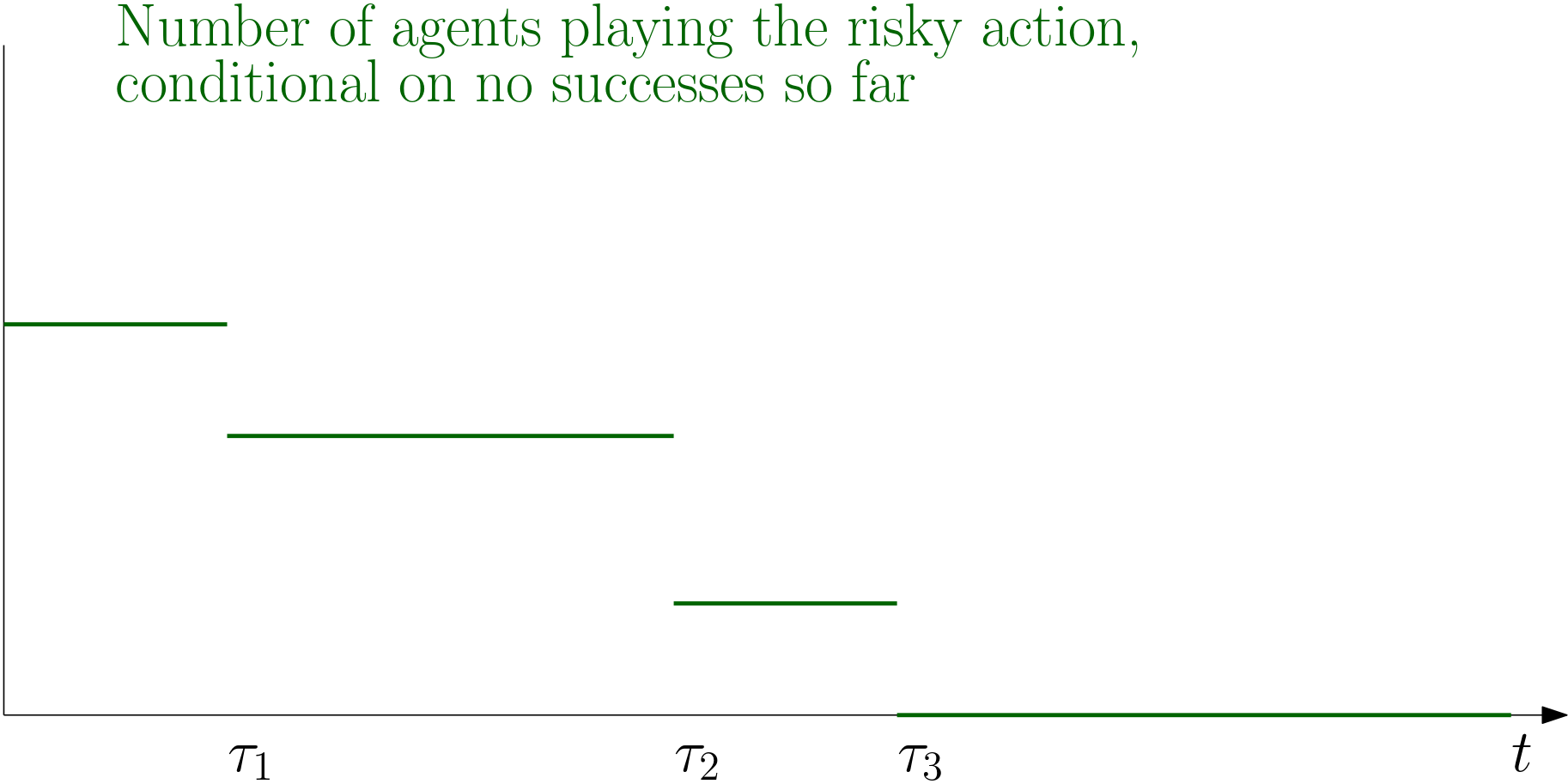}
    \caption{Cascade equilibrium outcomes}
    \label{fig:ind_hyp_straight}
\end{figure}

Note that cascade equilibria are described by the outcomes that are observed on path, remaining agnostic towards the equilibrium strategies that produce these outcomes. The primary motivation for defining them this way is that our interest lies in these outcomes themselves rather than the strategies employed to reach these outcomes.

\section{Equilibrium exploration}

\subsection{Existence of cascade equilibria}

First, I show that cascade equilibria are guaranteed to exist under some conditions. This is important because otherwise any results I prove using these equilibrium dynamics that I conjecture could be vacuously true. Since these are just sufficient conditions, I do not rule out that they may not exist outside of these conditions, but a more general proof will be a focus of future continuation of this work.

\begin{proposition}
With a common prior, that is when $p_i = p_j$ for all $i,j$, the following is a sufficient condition for the existence of cascade equilibria:

\begin{equation}\label{existenceineq}
    \frac{E_1}{E_0} \geq \frac{1-\pi}{\delta + \pi - (1-\pi)(1 + \frac{\delta \pi}{1-\delta})}
\end{equation}
\end{proposition}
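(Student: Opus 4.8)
The plan is to exhibit an explicit symmetric candidate and verify it is a weak perfect Bayesian equilibrium under inequality~\eqref{existenceineq}. Since the prior is common, I would look for a cascade with a single cutoff $\tau$ shared by all players, so that clauses 1--3 of the definition collapse to: everyone explores until the common cutoff; at $\tau$ a player switches to safe iff she has drawn $X_l$ in every period; and any continuation past $\tau$ triggers all players to explore forever. The decisive structural observation is that, on the equilibrium path, every player explores up to $\tau$ \emph{regardless} of her private draws, so actions are perfectly pooled and reveal nothing before $\tau$; the only informative event is a player continuing past $\tau$, which, because $X_h$ can occur only when $\theta=1$, reveals $\theta=1$ with certainty. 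This reframes clause 3 not as a punishment but as the sequentially rational response to learning the state, and it lets me pin down all on-path beliefs from each player's own signal alone.

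Next I would fix beliefs and off-path behavior. On path, a player's belief after a failure updates from $p$ to $\frac{p(1-\pi)}{p(1-\pi)+(1-p)}$, so her belief after $k$ failures is a deterministic decreasing sequence. Off path, I would attribute any action deviation to a tremble uninformative about $\theta$, which keeps the remaining players' cascade sequentially rational after a unilateral deviation, while assigning the degenerate belief $\theta=1$ after over-exploration so that clause 3 is credible. I would then check three incentive constraints for a failure-holder: (i) a player who has drawn $X_h$ explores forever, which is immediate since she knows $\theta=1$ and $E_1>0$; (ii) at $t=\tau$ she weakly prefers to switch to safe, which, together with (iii), pins down the admissible $\tau$ (the natural benchmark being the single-agent cutoff $p^a$ from the text, which the free-riding option pushes upward); and (iii) for every period before $\tau$ she prefers exploring to her best deviation.

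Constraint (iii) is the crux and is where the inequality enters. The tempting deviation is to free-ride: play safe now, keep observing, and rejoin (explore forever) exactly when someone continues past $\tau$ and thereby reveals $\theta=1$. I would write exploration's value as the current expected flow $p_tE_1-(1-p_t)E_0$ plus the option value of one's own success (probability $p_t\pi$ of jumping to certainty of $\theta=1$ and collecting $\frac{\delta}{1-\delta}E_1$), and compare it to the free-riding value, which forgoes both the flow and the own-success option but retains the option to rejoin at the revelation date. Reducing this comparison to primitives is what yields the bound $\frac{E_1}{E_0}\ge \frac{1-\pi}{\delta+\pi-(1-\pi)\left(1+\frac{\delta\pi}{1-\delta}\right)}$; I would argue that (iii) binds at the lowest belief reached in the cascade, so that evaluating it there makes the requirement independent of the prior, and I would obtain the independence from $n$ by bounding the free-riding continuation in an $n$-free way (e.g.\ by the most favorable revelation event), which is exactly why the statement is phrased as a \emph{sufficient} condition.

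The main obstacle I anticipate is precisely this free-riding comparison. Two features make it delicate. First, once a player stops early her belief and the calendar-timed revelation date become desynchronized from the ongoing cascade, so the relevant continuation values are indexed by history and time rather than by belief alone, and I must specify her off-path continuation (whether and at what belief she resumes) before the comparison is even well defined; I would handle this by fixing the off-path play as above. Second, the rejoining option value must be bounded tightly enough to collapse to the clean primitive inequality yet loosely enough to remain valid for every pre-cutoff period and every prior, which is where the $n$-free bound does the work. The remaining items --- sequential rationality of the other players' continued exploration and the credibility of clause 3 --- are comparatively routine once the off-path beliefs are fixed.
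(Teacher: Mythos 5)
Your overall skeleton does match the paper's: a symmetric profile in which everyone explores to a common cutoff tied to the single-agent threshold $p^a$, verification by one-shot deviation checks, and an $n$-free bound on the informational benefit of deviating, evaluated at a worst-case belief. But there is a genuine gap at exactly the step you call the crux, and it traces back to your off-path belief choice. You attribute an early switch to safe to a tremble that is uninformative about $\theta$, so the other players' behavior is completely unchanged by a deviation. Under that specification the ``rejoin at the revelation date'' option appears identically in the continuation value of exploring and of free-riding, hence cancels: your one-shot comparison collapses to ``current expected flow plus own-success option value $\geq 0$,'' which is precisely the single-agent condition defining $p^a$ and holds automatically at every on-path belief above the cutoff. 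Nothing in your construction generates the stated bound, so the sentence ``reducing this comparison to primitives is what yields the bound'' is where the argument stalls. The paper's construction differs in exactly this respect: an early switch is interpreted as revealing that the deviator drew only failures, so the other players update and may themselves switch --- thereby revealing their own draws --- one period later. That response is what gives the deviation a genuine informational payoff, which the paper bounds above by $p_i \delta^2 E_1$ (perfectly matching the state two periods out), and inequality \eqref{existenceineq} is imposed precisely so that the flow gain $(1-\delta)p_i[E_1 - \frac{1-p_i}{p_i}E_0]$ from exploring beats that bound.

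Two further points. First, keeping your tremble beliefs does not buy you a free proof of a stronger claim, because your calendar-indexed cutoff creates off-path histories --- for instance, a player who skipped one draw arrives at $\tau$ with belief still above $p^a$ --- at which the prescribed switch, and the inference ``exploring past $\tau$ reveals $\theta=1$,'' are no longer coherent or sequentially rational; the paper sidesteps this by making strategies belief-based (explore iff one's belief is at least $p^a$ or equals $1$), so every history, on or off path, maps to a well-defined prescription and inference. Second, your claim that the free-riding constraint binds ``at the lowest belief reached in the cascade'' is not how the paper's argument runs, and it would fail there: near $p^a$ the current flow is negative (a single agent explores there only for the option value), so a flow-versus-revelation bound is violated. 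The paper handles the last exploration period separately via single-agent incentives (a deviation there does not change when anyone's information is revealed) and applies the inequality only at beliefs at least two failures above the cutoff, binding at $p_i = \frac{p^a}{p^a + (1-\pi)(1-p^a)}$.
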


\begin{proof}

Consider the following strategy profile - for any history $h_t$, each agent plays the risky action till her belief drops below $p_a$, after which she switches to the safe action and plays the safe action as long as her belief stays below $p_a$. If at any history her belief is $1$, she plays the risky action forever. Drawing successes and failures lead to standard Bayesian updating. If at any history an agent switches to the safe action from the risky action, the other agents infer that the deviating agent drew only failures when playing the risky action and update their beliefs about the good state according to Bayes rule. If any agent plays the risky action at any period where her beliefs can only either be $1$ or below $p_a$, other agents infer that it must be $1$ and everyone plays the risky action forever after. It is straightforward to check that these beliefs are consistent on path. As a sidenote, note that this is a symmetric strategy profile, and on path everyone has the same cutoff $\tau$.

Thus we are left to check sequential rationality of the strategies using the one shot deviation principle. Note that given a profile of beliefs, the equilibrium strategies require the same path of play, regardless of the history that led to the profile of beliefs. So it is sufficient to check one shot deviations at all possible profile of beliefs. Consider an agent's incentives at some belief $p_i > p_a$ to play the risky action as prescribed by the strategy vs deviating for one period and playing the safe action. If $p_i$ is such that the agent's beliefs will fall below $p_a$ after observing one more failure, then whether she deviates or not, she'll play the safe action from the next period if she draws no success, so she faces the single agent incentives, and hence she will find it optimal to play the risky action. For $p_i > p_a$ such that this is not the case, we have $p_i \geq \frac{p_a}{p_a + (1-\pi)(1-p_a)}$ by Bayes' rule, and the difference in payoffs from risky vs safe is bounded below by $(1-\delta)p_i[E_1 - \frac{(1-p_i)}{p_i}E_0] - p_i \delta^2 E_1$. The first term is the minimum additional payoff from playing the risky action this period. The second term in the maximum additional payoff from playing the safe action this period, obtained when playing the safe action makes the other agents reveal their draws via their actions, leading to the agent being perfectly match the state two periods from now. This bound is non-negative due to the inequality in the claim. When the agent's belief $p_i < p_a$, if she plays the risky action, everyone else plays the risky action forever after so she gains no information externality, and hence she will find it better to play the safe action.
\end{proof}

The interpretation of this sufficient condition in words relates to the trade off that the agent faces in terms of the informational benefit of other agents' exploration versus the informational benefit of taking the risky action. The left hand side quantifies the ratio of payoff benefits from the arms, and the right hand side connects it to a threshold based on informational considerations. The right hand side of the inequality, signifying the threshold that the payoff benefits must surpass, is low when $\pi$ is high and $\delta$ is low. In effect, a higher value for $\pi$ indicates that the risky action provides a signal of its worth more often, while a lower value of $\delta$ means that the agents place lower value of the informational benefit of other agent's exploration (since it always comes in the future). 

A concern with showing existence only for common prior profiles would be that such profiles are measure zero in the space of belief profiles. I show below that the common prior assumption can be relaxed for this result, for priors that are sufficiently close to some common prior. This is because generically, the agents will not be indifferent between pulling the safe and risky arms when they pull the risky arm for the last time, and will also not be indifferent between the two arms when they pull the safe arm for the first time. This slackness allows for a small variation in priors while keeping incentives the same.

\begin{proposition}
For any generic common prior belief profile $q$ and any set of parameters such that Equation \ref{existenceineq} is satisfied, there exists an open ball $B(q)$ around $q$ such that for any $p \in B(q)$, there is a cascade equilibrium with prior $p$.
\end{proposition}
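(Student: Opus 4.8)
The plan is to exploit the strict inequalities ("slackness") that hold generically at a common prior, and then argue by a continuity/perturbation argument that the same constructed strategy profile remains a cascade equilibrium for priors in a small neighborhood. Recall from the proof of the previous proposition that with a common prior $q$ satisfying Equation \ref{existenceineq}, we constructed an explicit symmetric strategy profile (play risky until belief drops below $p^a$, then safe, with the off-path beliefs specified there) that is a weak perfect Bayesian equilibrium inducing the cascade outcome. The key observation is that verifying sequential rationality via the one-shot deviation principle reduces to finitely many inequalities indexed by the finitely many reachable belief profiles up to the common cutoff $\tau$, and the value of continuing versus deviating at each such node is a continuous function of the prior profile $p$.

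First, I would make precise what "generic" buys us. At the common prior $q$, an agent reaches her last exploration period with some belief $p^{\text{last}} > p^a$ and her first safe period with some belief $p^{\text{first}} < p^a$; generically (i.e., excluding a measure-zero set of parameter values where Bayesian updates land exactly on $p^a$) both of these are strict, so the agent is strictly not indifferent between the two arms at the moment she pulls the risky arm for the last time and strictly prefers the safe arm when she first switches. More strongly, the one-shot deviation value differences established in the previous proof are strict at every reachable node: the bound $(1-\delta)p_i[E_1 - \frac{(1-p_i)}{p_i}E_0] - p_i \delta^2 E_1 \geq 0$ can be taken strict under the generic choice of $q$ (the boundary case being non-generic), and likewise the strict preference for safe when $p_i < p^a$.

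Second, I would set up the perturbation. For a prior profile $p$ near $q$, I keep the \emph{same} cutoff structure and the \emph{same} off-path belief-updating rule as in the common-prior construction, now allowing possibly distinct cutoffs $\tau_i$ determined by when each agent's belief crosses $p^a$. Each agent's continuation payoff along the candidate path, and each one-shot deviation payoff, is a continuous (indeed, for fixed finite horizon of distinct play, piecewise-analytic) function of $p$, because Bayesian updates, the crossing times, and discounted sums all vary continuously with the prior when the relevant updates are bounded away from the threshold $p^a$. Since there are only finitely many sequential-rationality inequalities to check up to the horizon at which all agents have switched, and each holds \emph{strictly} at $p = q$, each continues to hold on some open neighborhood; intersecting these finitely many neighborhoods yields an open ball $B(q)$ on which the profile is a cascade equilibrium with prior $p$.

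The main obstacle, and the step requiring the most care, is controlling the discrete cutoffs $\tau_i$ and the consistency of the induced cascade outcome under perturbation. Two difficulties arise: (i) as priors separate, agents may acquire \emph{distinct} cutoffs, so I must ensure the ordering and the off-path inference "agent $j$ who fails to switch drew a success" remains well defined and that part (2) of the cascade definition — conditioning on lower-cutoff agents having already switched — is satisfied; and (ii) at the exact boundary where a crossing time would jump, continuity fails, which is precisely why genericity is invoked to keep all crossings strictly interior between integer periods. I would handle (i) by noting that strict separation of $p^{\text{last}}$ and $p^{\text{first}}$ from $p^a$ guarantees the integer cutoffs $\tau_i$ are locally constant in $p$, so the combinatorial structure of the cascade (who switches when, and in what order) is stable on a small enough ball; once the structure is frozen, the argument reduces cleanly to the finitely many strict continuous inequalities above.
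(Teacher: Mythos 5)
Your proposal is correct and follows essentially the same route as the paper's proof: a continuity-plus-genericity perturbation argument in which generic strict non-indifference at the stopping time guarantees that the cutoff structure and the (finitely many) sequential-rationality inequalities survive small perturbations of the prior. Your write-up is somewhat more thorough than the paper's---which only verifies the boundary conditions at $\tau$ and $\tau+1$ and leaves the interior one-shot-deviation checks and the local constancy of the integer cutoffs implicit---but the underlying idea is identical.
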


\begin{proof}

Denote by $V_i(t,p_i)$ the continuation value of agent $i$ at time $t$ with belief $p_i$. Denote by $p(t,q_i)$ the belief of an agent at time $t$ given that she started out with a prior $q_i$ at time $t=0$, and then drew only failures till time $t$. Note that the equilibrium in Proposition 1 was symmetric, so on path agents will explore till some time $\tau$ and then all together stop exploring if none of them drew a success. 

The continuation value in the last period that they explore can be written as 

\begin{equation*}
    V(\tau,p(t,q_i)) = p(t,q_i) E_1 + (1-p(t,q_i)) E_0
\end{equation*}

Since $p(t,q_i)$ is derived by Bayes rule, it is continuous in $q_i$. And hence so is $V(\tau,p(t,q_i))$. Now suppose the common prior profile $q$ is such that agents are not indifferent between arms when stopping exploration i.e. there exists $\rho$ such that the value from playing the risky arm for the last time $V_i(\tau, p(\tau,q_i)) > \rho$ and the value from playing the risky arm the next time period $V_i(\tau+1, p(\tau+1,q_i)) < -\rho$. Then by continuity there is some $\epsilon > 0$ such that for priors $p_i \in (q_i - \epsilon, q_i + \epsilon)$, we have that $V_i(\tau, p(\tau,p_i)) \geq 0$ and $V_i(\tau+1, p(\tau+1,p_i)) \leq 0$. Thus agent $i$ will continue to find it optimal to explore till $\tau$ and then stop. 
\end{proof}

The proposition above establishes that the existence of cascade equilibria is not restricted to a measure zero subset of the space of prior profiles, since they exist in open balls around the set of common prior beliefs. Hence it is not something that one can rule out generically. However, it does leave open the question of whether one can expect them to exist for a significant proportion of prior beliefs.

\subsection{Properties of cascade equilibria}

Cascade equilibria were motivated by the intuitive way that agents explore in them, so the next logical step is to say more about the properties of agents' exploration. Since their exploration is quantified by the cutoff $\tau_i$, this is the relevant quantity that we must analyze.

A natural conjecture that one might have is that $\tau_i$ are ordered by agent optimism i.e. agents that start out with a higher $q_i$ will have a larger $\tau_i$. First I establish that this intuition holds for the most optimistic agent (labelled agent $1$ wlog).

\begin{proposition}\label{mostoptlast}
In any cascade equilibrium, $\tau_1 \geq \tau_j$.
\end{proposition}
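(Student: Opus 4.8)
The plan is to argue by contradiction: suppose some cascade equilibrium has $\tau_1 < \tau_j$ for a less optimistic agent $j$, so that agent $1$ abandons the risky arm strictly before $j$ does. I will contradict this by showing that at the moment agent $1$ is supposed to be free-riding she would in fact strictly prefer to keep experimenting.

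First I would establish a belief-ordering lemma. On the all-failures path the only information any agent conditions on is a set of observed failures---her own draws plus the draws revealed by agents who have already switched---and each failure enters the posterior through the same likelihood factor $1-\pi$. The posterior odds therefore factor into prior odds times a common likelihood term, so while two agents are both still experimenting the more optimistic one holds the higher belief. The bookkeeping I would need to be careful about is extending this past date $\tau_1$: once agent $1$ switches she reveals her own failures, but any agent still experimenting at a later date has then conditioned on every failure agent $1$ ever saw plus her own extra failures, so agent $1$'s belief still weakly exceeds that of every remaining experimenter.

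Next I would pin down the belief of the last agent to stop. In the final phase, when the highest-cutoff agent experiments while everyone else has switched, there is no information externality left, and by the off-path clause of the definition any deviation of hers triggers ``everyone experiments forever,'' which is uninformative; her problem is thus exactly the single-agent problem, so she experiments only while her belief is at least $p^{a}$. By the ordering lemma this forces agent $1$'s belief to lie strictly above $p^{a}$ throughout that phase, even though the equilibrium prescribes that she play safe. The hard part---and the crux of the proof---is then to show that with a belief above $p^{a}$ agent $1$ strictly prefers experimenting to free-riding. The economic content is a single-crossing property: experimenting minus free-riding is increasing in one's own belief, because a success pins down $\theta=1$ whether it is one's own draw or that of the agent one watches, but experimenting delivers this information sooner and, since the belief exceeds $p^{a}$, carries strictly positive expected flow, whereas the only thing forgone, the externality, is destroyed by the collapse clause and is anyway redundant given one's own draws. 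Turning this into an inequality is delicate precisely because the option value of free-riding is positive and is largest near $p^{a}$; one must bound it below the single-agent continuation value when the belief is bounded away from $p^{a}$, and separately handle ties at the maximal cutoff, where agent $1$ would be free-riding on several simultaneous experimenters rather than one.
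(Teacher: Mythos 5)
You follow the same skeleton as the paper's own proof: suppose $\tau_1 < \tau_j$, order posteriors on the all-failures path (agent $j$ conditions on every failure agent $1$ has revealed plus her own private failures, so $p_j(\tau_j) < p_1(\tau_j)$), note that the last experimenter has no information externality left and so explores only while her belief is at least $p^a$, and then try to contradict agent $1$'s willingness to play safe at that date. Where the paper disposes of the last step in one line---asserting that agent $1$ also faces ``single agent incentives'' at $t=\tau_j$---you correctly observe that she does not: by playing safe she retains the option of learning from whether $j$ switches at $\tau_j+1$, whereas deviating to risky triggers clause 3 and destroys exactly that information. You call this comparison ``the hard part and the crux.'' The problem is that you never carry it out: your proposal ends with a description of bounds that ``one must'' establish, not with the bounds themselves. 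Since the entire content of the proposition sits in that inequality, this is a genuine gap, not a stylistic omission.

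Moreover, the inequality you propose to prove---own belief above $p^a$ implies exploring strictly beats free-riding---is false in general, so the route as you describe it cannot be completed. The free-riding value at $\tau_j$ is roughly $\delta\, p_1\,[1-(1-\pi)^{N_j}]\,E_1$, where $N_j$ is the number of draws whose outcomes $j$'s switch decision reveals; this is not a small option value concentrated near $p^a$ but approaches $\delta\, p_1 E_1$ when $N_j$ is large, while agent $1$'s belief (with nearly common priors) sits only about $N_j$ failures' worth of odds above $p^a$, and the single-agent value of exploring there still pays the bad-state flow losses. Concretely, normalizing $E_1=E_0=1$ and taking $\pi=0.1$, $\delta=0.95$, $N_j=20$, one gets $p^a\approx 0.26$, $p_1\approx 0.72$, a single-agent exploration value of about $0.51$ and a free-riding value of about $0.60$: free-riding wins. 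Your side remark that the forgone externality is ``anyway redundant given one's own draws'' is also wrong---$j$'s draws are i.i.d.\ conditional on $\theta$ and independent of agent $1$'s, hence genuinely informative. Any complete proof therefore needs more than the pointwise comparison at $\tau_j$ (for instance, agent $1$'s incentive constraint at her own switch date, where the pending revelation is more heavily discounted, or consistency conditions from the rest of the equilibrium). Be aware that the paper's own one-line resolution suffers from the same difficulty you uncovered; you located the soft spot, but locating it is not closing it.
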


\begin{proof}
Suppose  $\tau_1 < \tau_j$ for some agent $j \neq 1$ i.e. agent $1$ stops exploring before agent $j$. Wlog let agent $j$ be the last to switch to the safe action. Then, conditional on both of them getting only failures, we have that at $t = \tau_j$, $p_j(t) < p_1(t)$ since agent $j$ started out with a more pessimistic prior and has explored more. Moreover both agent $1$ and agent $j$ face single agent incentives at $t = \tau_j$ since neither agent's action influences how the other will act in the future. But then, if it is best for agent $j$ to play the risky action, it must be best for agent $1$ as well.
\end{proof}

This need not be true in general i.e. it is possible that $\tau_i$ < $\tau_j$ even though $p_i > p_j$. To see this, consider the following example.

\begin{example}\label{cutoffex}
Suppose there are three agents, and the priors are such that the first agent reaches the single-agent belief cutoff after observing $4$ failures, while the second and third reach it after $2$ draws, with the third agent a bit more pessimistic than the second. Specify the following strategies:

\begin{itemize}
    \item On the equilibrium path with no successes, agent $1$ plays the risky action for $3$ time periods, agent $2$ for $1$ period, and agent $3$ for $2$ periods.
    \item Off the equilibrium path, agents play the risky action until they reach their single-agent cutoff belief. If they see an agent play the risky action when they should've been playing the safe action, they infer that the state is good with probability $1$. If they see an agent switching from the risky to safe action, they infer that the agent only got failures so far.
\end{itemize}

I show in Appendix \ref{revorder} that this is an equilibrium for some parameter values. Note that agent $3$ explores more than agent $2$ even though her prior is more pessimistic. Both of them find it optimal to continue with this arrangement, for their own reasons. Agent $2$ knows that if she plays the risky action when she is supposed to switch, the others will assume that the state is good and always plays risky forever, which will lead to her losing access to their information. Agent $3$ knows that if she switches to the safe action, agent $1$ will play the risky action only until $t=2$, which is bad for her since it reduces her informational externality.
\end{example}

A corollary of Example \ref{cutoffex} is that cascade equilibria need not be unique for a given prior. Note that we could have switched the roles of agents $2$ and $3$, and continued to have the same incentives for staying on the equilibrium path as in the example. Agent $3$ would be willing to stop exploring early, since she is more pessimistic than agent $2$, and given that agent $3$ stops early, agent $2$ is willing to play the risky action for $2$ periods, since she is more optimistic than agent $3$.

This analysis carries over for comparative statics across different priors. We can compare $\tau_1$ across different prior profiles, but it is not possible to compare $\tau_i$ in general. This is established in the proposition and example that follow.

\begin{proposition}
Suppose for some $p,p'$, $p'_1 \geq p_1$. Then $\tau_1(p') \geq \tau_1(p)$.
\end{proposition}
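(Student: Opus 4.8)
The plan is to collapse agent $1$'s stopping problem to the single-agent cutoff rule and then lean on monotonicity of the posterior in the prior. By Proposition \ref{mostoptlast}, agent $1$ is weakly the last agent to abandon the risky action in any cascade equilibrium. Thus at her own cutoff every other agent has already switched to the safe action, so continuing to explore generates no further informational externality and agent $1$ simply faces the single-agent problem at her current belief. She therefore explores in a period exactly when her posterior is weakly above the single-agent threshold $p^a$, and $\tau_1$ is the last period for which this holds.

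First I would record the belief dynamics on the cascade path, where every draw is a failure. A single failure multiplies the odds $\tfrac{1-p}{p}$ against the good state by $\tfrac{1}{1-\pi}>1$, so writing agent $1$'s posterior at the start of a period as a function of $p_1$ and the total number of failures she has observed—her own draws together with those she infers from the switches of agents who have already stopped—the posterior is strictly decreasing in that failure count and, for a fixed count, strictly increasing in $p_1$. Hence the number of observed failures just sufficient to drive agent $1$'s belief below $p^a$ is weakly increasing in $p_1$.

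Next I would argue by contradiction. Suppose $p'_1 \geq p_1$ but $\tau_1(p') < \tau_1(p)$. At period $\tau_1(p')$ agent $1$ switches under $p'$ while she is still exploring under $p$; by the first step she is facing single-agent incentives in both, so under $p'$ her belief has just fallen below $p^a$ whereas under $p$ it is still weakly above $p^a$. Holding fixed the information contributed by the other agents and raising only agent $1$'s own prior cannot lower her belief, so a belief below $p^a$ at the more optimistic prior $p'_1$ would force a belief below $p^a$ at $p_1$ in the same period, contradicting that agent $1$ is still exploring under $p$.

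The hard part, and the step I expect to be the main obstacle, is precisely the ``holding fixed the information from the other agents'' clause: across $p$ and $p'$ the other agents' priors, and hence their cutoffs and the periods in which their failures become visible, need not coincide, so the count of failures feeding into agent $1$'s belief can differ between the two profiles. The fact that rescues the comparison is again that agent $1$ is last to stop, so by her cutoff she has seen the complete exploration history of every rival and their influence on her belief is summarized by a single cumulative failure count; the delicate point is to couple the two profiles on these revealed failures and to check that the lumpiness with which they arrive—several rivals revealing in the same period, as in Example \ref{cutoffex}—does not overturn the monotone effect of raising $p_1$. When the other agents' behavior is held fixed this coupling is immediate and the argument closes cleanly; the general case is where the real care is needed.
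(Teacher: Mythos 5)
Your outline matches the paper's: Proposition \ref{mostoptlast} places agent $1$ weakly last, so at her cutoff she faces single-agent incentives, and monotonicity of the stopping rule in the prior should finish the argument. But the gap you flag in your final paragraph is genuine, and it is not a loose end --- it is the crux. Agent $1$'s cutoff is pinned down by the first time her belief crosses $p^a$, and that belief aggregates her own failures with the failures she infers when rivals switch; it is therefore a function of the whole profile, not of $p_1$ alone. The paper's own Example \ref{cutoffex} makes this concrete: there agent $1$ can withstand $4$ of her own failures before crossing $p^a$, yet she stops after only $3$ periods, because agents $2$ and $3$ reveal three additional failures through their switches. Since the proposition places no restriction on $p_{-1}$ versus $p'_{-1}$, the revealed-failure processes under the two profiles need not be comparable, and the coupling you ask for is genuinely unavailable in general: take $p'_1 = p_1$ but rivals under $p'$ who explore longer than under $p$, so that they dump a larger lump of inferred failures on agent $1$ when they switch; then agent $1$ can cross $p^a$ strictly sooner under $p'$ than under $p$, reversing the claimed inequality. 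Nothing in the definition of cascade equilibria rules out such profiles --- they have exactly the structure of the paper's own examples. So your refusal to wave away the cross-profile coupling is the correct instinct; read literally, with free rival priors, the statement itself is in doubt.

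It is worth knowing how the paper's proof handles this step: it does not. The paper asserts that, because agent $1$ faces single-person incentives, she ``stops exploring at $\tau^a(p_1)$'' --- that is, her equilibrium cutoff equals her single-agent stopping time, a function of $p_1$ only, which makes monotonicity immediate. But single-agent incentives pin down the \emph{threshold belief} $p^a$ at which she stops, not the \emph{calendar date} at which that threshold is reached; the date depends on the rivals' reveals, and Example \ref{cutoffex} ($\tau_1 = 3 < 4 = \tau^a(p_1)$) directly contradicts the paper's assertion. So your proof and the paper's proof break at the same point; the difference is that yours says so. The proposition is sound under the natural reading that $p$ and $p'$ differ only in agent $1$'s coordinate and that the rivals' equilibrium cutoffs are held fixed across the comparison: then agent $1$ observes identical failure counts in both profiles, your odds-ratio monotonicity applies period by period, and your contradiction argument closes cleanly. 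To make your proof complete, state that restriction (or an equilibrium selection that implies it) explicitly and invoke it where you currently flag the obstacle.
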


\begin{proof}
By Proposition \ref{mostoptlast}, it must be the case that agent $1$ must be the last to stop exploring for both priors (note that agent $1$ refers to the most optimistic agent here so it is possible that this may not denote the same agent). Since the last agent to stop exploring faces single-person incentives, agent $1$ stops exploring at $\tau_a(p_1)$ with prior profile $p$ and at $\tau_a(p'_1)$ with prior profile $p'$. Given that $p_1 \geq p'_1$, we must have that $\tau_a(p_1) \geq \tau_a(p'_1)$. Thus the result follows.
\end{proof}

\begin{example}
It is not possible to compare the cutoffs of the other agents, even if we fix the priors of all but one of these agents. To see this, consider the setting of Example \ref{cutoffex}. In that example, agent $1$ explores $1$ period in equilibrium. However, as noted earlier, there is an equilibrium for the same prior in which the agent $1$ explores for $2$ periods. In fact, with a slightly smaller prior than in that example, agent $1$ would still explore for $2$ periods in an equilibrium, which means that she explores more for a smaller prior.

The intuition behind this is that the amount of exploration done by an agent depends on both her own beliefs and what other agents do in equilibrium. If the payoff that she anticipates from the actions of the other agents are sufficiently high, she will be willing to stop exploration at an earlier stage, even at a higher belief, compared to the case where the other agents aren't doing much socially beneficial exploration.
\end{example}

\subsection{Comparison to single-agent setting}

How does experimentation in the social setting compare to experimentation in the single agent setting? The answer to this question is nuanced. On the one hand, we can show that with a common prior, conditional on not having observed a success, an agent explores less in the social setting than if she were experimenting alone. To see this, define $\tau^a(p_i)$ as the time at which the agent switches to the risky action when experimenting alone, conditional on no successes. Given the set of cascade equilibrium cutoffs $T(p_i, p_{-i})$ for an agent in the social setting, also define

\begin{equation*}
    \Bar{\tau}(p_i, p_{-i}) = \max \{\tau \mid  \tau \in T(p_i, p_{-i})\}
\end{equation*}

\begin{proposition} \label{commexp}
If $p_i = p$ for all $i$, then $\tau^a(p) \geq \Bar{\tau}(p)$.
\end{proposition}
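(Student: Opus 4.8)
The plan is to show that in any cascade equilibrium with common prior $p$, \emph{no} agent's cutoff can exceed the single-agent cutoff $\tau^a(p)$. Since $\bar\tau(p)$ is by definition the largest cutoff a fixed agent attains across all cascade equilibria, this immediately yields $\tau^a(p)\ge\bar\tau(p)$. First I would fix a cascade equilibrium and restrict attention to the no-success path (the conditioning in the statement), on which the agents switch to the safe action at their cutoffs; after relabeling write these as $\tau_1\ge\tau_2\ge\cdots\ge\tau_n$ and set $\tau_{\max}=\tau_1$. Because every agent's cutoff in every equilibrium is bounded by that equilibrium's largest cutoff, it suffices to prove $\tau_{\max}\le\tau^a(p)$ in every cascade equilibrium.

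The heart of the argument is that the agent carrying the largest cutoff faces exactly the \emph{single-agent} incentives at her final exploration step, an observation already exploited in Proposition \ref{mostoptlast}. Consider a one-shot deviation in which an agent $i$ with cutoff $\tau_{\max}$ switches to the safe action one period early, at $t=\tau_{\max}$. I would argue her marginal incentive to take the final risky pull coincides with the single-agent comparison by checking two things. First, her access to the information generated by the other agents is untouched by this deviation: she still observes every other agent's action at $t=\tau_{\max}+1$ regardless of whether she herself pulled the risky arm at $\tau_{\max}$, so the option value of conditioning on a rival's revealed success enters both the ``explore'' and the ``deviate'' continuation values as the same additive term and cancels. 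Second, the deviation does not provoke the off-path flooding response of clause (3): an \emph{early} switch to safe is read, under the cascade's off-path beliefs, as ``the agent drew only failures,'' not as the failure-to-switch that makes everyone explore forever, so the remaining agents' continuation play is left unchanged. Hence the difference between exploring and deviating at $\tau_{\max}$ reduces to the immediate expected payoff plus the option value of her own draw, i.e.\ precisely the lone-agent problem at her current belief.

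I would then close by contradiction. Suppose $\tau_{\max}>\tau^a(p)$. Since $p_i=p$ for all $i$ and we are on the no-success path, at period $\tau_{\max}$ agent $i$'s posterior has fallen strictly below $p^a$, because a lone agent starting at $p$ would already have stopped at $\tau^a(p)$. By the single-agent optimality that defines $p^a$, exploring at any belief below $p^a$ is strictly dominated by stopping, so the one-shot deviation to safe at $\tau_{\max}$ is strictly profitable, contradicting sequential rationality. Therefore $\tau_{\max}\le\tau^a(p)$, and taking the supremum over agents and equilibria gives $\tau^a(p)\ge\bar\tau(p)$.

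I expect the main obstacle to be the two cancellation claims of the second paragraph, namely that the social information externality is orthogonal to the agent's own terminal action. The delicate case is a tie at the top, where several agents share $\tau_{\max}$ and agent $i$ is not literally exploring alone at $\tau_{\max}$; there I must lean on the fact that she \emph{observes} rather than \emph{influences} her co-explorers' subsequent choices, together with the benign interpretation of an early switch, to preserve the clean single-agent reduction. Once the externality is shown to cancel out of the marginal comparison, the contrast with $p^a$ is routine.
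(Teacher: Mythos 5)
Your proof is correct and follows essentially the same route as the paper's: identify the agent carrying the largest cutoff, observe that the other agents' continuation play---and hence the information spillover---is the same whether she takes her final risky pull or switches to safe, reduce her last-period incentive to the single-agent comparison, and derive a contradiction from her posterior having fallen below $p^a$. The only difference is one of emphasis: you spell out the off-path claim (that an early switch is read benignly and does not trigger the clause-(3) flooding response), which the paper compresses into the terser assertion that the other agents' play ``cannot condition'' on her action at $\bar{\tau}(p)$.
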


\begin{proof}
Suppose not. Then, there is some equilibrium of the social setting where an agent $i$ explores for $\Bar{\tau}(p) > \tau^a(p)$. Consider the time period $t=\Bar{\tau}(p)$, which is the last period that agent $i$ chooses the risky action, conditional on no successes so far. By the definition of $\Bar{\tau}(p)$, either all other agents have already switched to the safe action, or this will be the last period that they play the risky action, conditional on no successes. So the play of other agents cannot condition on what agent $i$ plays at $t=\Bar{\tau}(p)$, so her continuation payoff for both actions has the same information spillover benefits. Thus, she faces the single-agent trade-offs, which means she cannot play the risky action at $t=\Bar{\tau}(p)$.
\end{proof}

With heterogenous beliefs, however, this is no longer necessary, as I show in Example \ref{hetexp}. There can be equilibria in which agents explore more than when they explore alone. The intuition behind this is related to how agents anticipate the effects of their actions on the exploration of other agents. When an agent switches from the risky to the safe action, it immediately reveals the results of all her past draws to all other agents. This can cause the other agents to become pessimistic and stop exploring, even if they would've continued exploring without the release of such information. Thus agents can strategically explore a bit more than they would've in the single-agent setting, in order to benefit from the information spillover from the others. Note that for this story to be feasible, the other agents should be more optimistic, which necessitates heterogenous beliefs.

\begin{example} \label{hetexp}
Let $n=3$. The initial belief $p_{1}^{0}$ for the first agent is such that three bad draws drop her belief below $p^a$, while for the other two, it is such that five bad draws drop their beliefs below the single-agent cutoff. Specify the following strategies:

\begin{itemize}
    \item If $t \leq 4$ and everyone has played the risky action so far, everyone plays risky. 
    \item At $t=5$, agent $1$ switches to playing the safe action while agents $2$ and $3$ play the risky action.
    \item On and off path, agents $2$ and $3$ switch to the safe arm when their belief drops below $p^a$. Off path, agent $1$ switches to playing the safe action when her belief drops below $p^a$.
\end{itemize}

In Appendix \ref{hetexpapp}, I show that this strategy profile is indeed a PBE for some values of the parameters. Note that in this equilibrium, agent $1$ explores for $4$ periods, which is more than what she would have explored if she had been exploring alone.

However the logic from Proposition \ref{commexp} carries through for the most optimistic agent, labelled agent $1$ wlog, if such an agent exists uniquely. This is because in any cascade equilibrium, she must be the last agent to explore. But then, she cannot be exploring more than her single agent levels. 

\begin{proposition}\label{hetexp}
In any cascade equilibrium, $\tau^a(p_1) \geq \tau_1$.
\end{proposition}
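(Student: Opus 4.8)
The plan is to reuse the single-agent reduction from Proposition \ref{commexp}, after first using Proposition \ref{mostoptlast} to pin agent $1$ to the end of the exploration order, and to argue by contradiction. So I would suppose $\tau_1 > \tau^a(p_1)$ and exhibit a strictly profitable one-shot deviation at agent $1$'s final exploration period. By Proposition \ref{mostoptlast} we have $\tau_1 \geq \tau_j$ for every agent $j$, so on the equilibrium path, conditional on no successes, agent $1$ is (weakly) the last agent still playing the risky action: every other agent has either already switched to the safe action before $t = \tau_1$ or switches simultaneously at $t = \tau_1$, and in either case no agent $j \neq 1$ plays the risky action after period $\tau_1$.

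The key step is to show that at $t = \tau_1$ agent $1$ faces exactly the single-agent trade-off. Since no other agent explores beyond $\tau_1$ on path, the continuation play of the other agents cannot condition on whether agent $1$ plays $r$ or $s$ at $\tau_1$: switching to $s$ one period early merely reveals her (all-failure) draws to agents who have already stopped, which does not induce any of them to resume exploration, while playing $r$ as prescribed likewise leaves their stopping behavior unchanged. Hence both actions deliver the same (zero) future information spillover from the others, so agent $1$'s comparison between $r$ and $s$ at $\tau_1$ reduces to the comparison driven solely by her own belief and her own future draws, i.e.\ the single-agent trade-off at the belief $p(\tau_1, p_1)$ reached after $\tau_1$ failures.

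Now the contradiction. If $\tau_1 > \tau^a(p_1)$, then conditional on no successes agent $1$'s belief at $t = \tau_1$ has already fallen below the single-agent cutoff $p^a$. Facing single-agent incentives at a belief below $p^a$, the safe action is strictly optimal, so switching to $s$ at $\tau_1$ is a strictly profitable one-shot deviation, contradicting the sequential rationality required of a weak perfect Bayesian (cascade) equilibrium. Therefore $\tau^a(p_1) \geq \tau_1$.

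The main obstacle is precisely the key step---certifying that agent $1$ genuinely faces single-agent incentives at her last exploration period. The substantive content of cascade equilibria sits in their off-path triggers (the third clause of the definition), so one must check that the contemplated deviation to $s$ does not set off a trigger that resurrects an information spillover, for instance by making some already-stopped agent re-explore. This is exactly where agent $1$'s maximality from Proposition \ref{mostoptlast} does the work: every less optimistic agent has already reached or passed her own stopping point, so revealing agent $1$'s failures cannot make any of them more willing to explore, and no beneficial spillover can be manufactured by the deviation. A secondary point to handle cleanly is the tie case flagged in the surrounding text: if several agents share the maximal cutoff, their period-$\tau_1$ moves are simultaneous and none conditions on agent $1$'s action, so the single-agent reduction still goes through and the argument is unaffected.
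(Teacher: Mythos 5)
Your proof is correct and takes essentially the same route as the paper: the paper's own proof simply invokes Proposition \ref{mostoptlast} to place agent $1$ last in the exploration order and then reuses the single-agent reduction from the proof of Proposition \ref{commexp}, which is precisely the argument you spell out. Your additional care about off-path triggers and ties goes beyond the paper's brief justification but does not change the approach.
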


\begin{proof}
The proof for this carries through in the same way as the proof for Proposition \ref{commexp}, since by Proposition \ref{mostoptlast}, agent $1$ is the last to stop exploring. 
\end{proof}
\end{example}

\subsection{Exploration with contracts}

Suppose agents now have access to money, using which they can write enforceable formal contracts ex ante\footnote{I assume away betting on the underlying state, which can lead to agents offering and accepting unbounded bets.}. I assume that there is a contracting stage before exploration, in which agents have access to as much financing as they
expect to get in the best case on the path of play. This can be interpreted as there being a financial institution that can learn the prior of the agent by verifying the private information that an agent has regarding the value of the project, and then providing the agent with financing up to the expected value under the agent's prior.

\begin{proposition}
Suppose there exists a most optimistic agent $i$ i.e.  $p_i > p_j$ for all $j \neq i$. Then if a contract is written ex ante, the outcome with money will be that agent $i$ owns all payoff streams and explores for $\tau^e(p_i)$ periods.
\end{proposition}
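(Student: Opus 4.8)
The plan is to treat the ex-ante contracting stage as an allocation-with-transfers problem and to argue in two movements: first, that agent $i$'s willingness to pay for every payoff stream strictly dominates that of every other agent, so that the only contract not undercut by a side deal must hand all streams to $i$; and second, that once $i$ owns all $n$ streams she is simply an isolated single agent facing the $n$-arm bandit, whose optimal policy stops at the cutoff $p^e$, i.e.\ after $\tau^e(p_i)$ periods. The whole argument thus reduces to a valuation comparison plus a budget-feasibility check.

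First I would set up the relevant valuations. Let $W_k(p)$ denote the value to a single decision-maker with belief $p$ who owns and optimally operates $k$ risky arms, observing all their (private) payoffs. Two monotonicity facts drive everything. (i) $W_k(p)$ is strictly increasing in $p$, since the flow payoff of a risky arm is increasing in the belief (as $E_1 > 0 > E_2$) and a more optimistic owner also faces a more favorable continuation problem. (ii) $W_k(p)$ is superadditive across arms, $W_{k+m}(p) \geq W_k(p) + W_m(p)$, because a single owner of many arms pools the payoff signals and conditions her continuation choices on the combined information, which separate owners cannot replicate; this is the same force that lowers the experimentation cutoff from $p^a$ toward $p^e$ as the number of arms grows. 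Together, (i) says the most optimistic agent values each stream most, and (ii) says value is (weakly) created by concentrating ownership.

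Second, I would run the trade argument stream by stream. Agent $j$'s reservation value for her own stream is what she secures by declining to sell — bounded above by $W_1(p_j)$ plus whatever informational spillover she enjoys as a continuation participant — while agent $i$'s incremental willingness to pay for that stream is at least $W_1(p_i) > W_1(p_j)$ by (i). Hence each stream carries a strictly positive gain from transfer to $i$, and by (ii) agent $i$ strictly prefers the whole bundle to any proper subset, so any contract leaving a stream elsewhere can be blocked by a bilateral side deal with $i$. The no-betting assumption in the footnote is what keeps these valuations finite, ruling out the unbounded speculative trades that heterogeneous priors would otherwise permit. Feasibility then holds because the total payments $i$ must make are bounded by $\sum_{j \neq i}(\text{reservation of } j)$, which lies below $W_n(p_i)$ by (i)–(ii) and $p_i > p_j$; and the financing assumption grants $i$ access to funds up to her prior-expected value $W_n(p_i)$, so she can cover the transfers and retain strictly positive surplus.

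Finally, conditional on owning all $n$ streams and having seen no success, agent $i$ has no remaining social-learning channel to exploit, so she implements the optimal $n$-arm single-agent policy with belief $p_i$, which by the definition of $p^e$ explores for exactly $\tau^e(p_i)$ periods. I expect the main obstacle to lie in the first movement under heterogeneous priors: pinning down each agent's reservation value, which depends on what continuation game is played after partial trade, and making the blocking argument airtight when the welfare comparisons must be carried out separately under each agent's own measure rather than under a common expectation. Establishing the superadditivity (ii) cleanly — that pooled information strictly raises single-owner value — is the other step needing care, though it is intuitively just the statement already used in the paper that more arms push the cutoff down from $p^a$ to $p^e$.
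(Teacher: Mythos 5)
Your skeleton matches the paper's at the endpoints: the most optimistic agent ends up valuing the full bundle most, the financing assumption converts subjective valuation into ability to pay, and once she owns all $n$ streams she faces a pure single-agent $n$-arm problem, stopping at $p^e$, i.e.\ exploring $\tau^e(p_i)$ periods. The genuine gap is in your middle movement, the stream-by-stream blocking argument, and it is exactly the obstacle you flag at the end but never close. You bound agent $j$'s reservation value by ``$W_1(p_j)$ plus whatever informational spillover she enjoys as a continuation participant,'' and then conclude from $W_1(p_i) > W_1(p_j)$ that every stream carries a strictly positive gain from transfer to $i$. That inference is invalid as stated: the spillover term is not controlled by anything, so the inequality between the $W_1$'s does not dominate it. Worse, the reservation value is not even a well-defined number until you specify which continuation game (and which equilibrium of it) follows an arbitrary partial trade; since the paper itself shows cascade equilibria need not be unique for a given prior, there is no canonical quantity for your bilateral side deal to block against. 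Your superadditivity claim (ii) has a parallel ambiguity under heterogeneous priors: it is unclear under whose measure the pooled value is computed, and it is not actually needed for the result.

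The paper avoids all of this by never touching per-stream values or sellers' reservation values. It works only with the whole-bundle subjective valuation: it writes out the explicit ex-ante payoff of an agent with prior $p$ who owns all $n$ streams and explores efficiently, and shows this is increasing in $p$. The one technical wrinkle is that $\tau^e(p)$ is a step function, so the formula is not obviously monotone; the paper handles the jump points by noting that an agent exactly at a jump is indifferent between the two exploration levels. (Your qualitative justification for monotonicity---a more optimistic owner faces a more favorable problem---can be made rigorous as an upper-envelope-of-policies argument and would even sidestep the jump issue, so claim (i) is the salvageable part of your plan.) Then the financing assumption caps each agent's feasible side payments at her own subjective value of the bundle, so monotonicity alone implies the most optimistic agent can offer strictly more than anyone else can; ownership concentrates in her hands without ever analyzing post-trade continuation play. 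To repair your proof, you should drop the blocking/reservation-value machinery and adopt this ``highest feasible bid'' logic, or else add an explicit assumption pinning down the continuation game after partial trades.
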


\begin{proof}
Consider an agent with prior $p$ about the state. For now, let each of the $n$ payoff streams be represented by a bandit. Her maximum payoff is when she can utilize all bandits to learn about the state and maximize the total payoff from all bandits - this is the efficient level of exploration. The efficient cutoff belief at which the agent stops exploring, i.e. when the agent is indifferent between pulling just one arm and stopping all exploration, is given by 

\begin{equation*}
    p^e = \frac{(1-\delta)E_0}{(1-\delta)(E_1 + E_0) + \delta n \pi E_1}
\end{equation*}

Given the prior $p$, this corresponds to a cutoff $\tau^e(p)$ at which the agent stops all exploration. Note that $\tau^e(p)$ is increasing in $p$ as a step function, i.e. it is constant in intervals and then jumps up by $1$ at the right end point of the intervals.

Thus, the total payoff that the agent ex ante expects to get is given by:

\begin{align*}
    \sum_{t=1}^{\tau^e(p)} & n \{ p (1-\delta) E_1 - (1-p)  (1-\delta) E_0 \} + np[1-(1-\pi)^{{n\tau^e(p)}}]\delta^{\tau^e(p)+1}E_1 \\
    = & n(1 - \delta^{\tau^e(p)+1}) [ p E_1 - (1-p) E_0 ] + np[1-(1-\pi)^{{n\tau^e(p)}}]\delta^{\tau^e(p)+1}E_1   \\
\end{align*}

As $p$ increases, whenever $\tau^e(p)$ is constant, this is clearly increasing in $p$. At the cutoff where $\tau^e(p)$ jumps up, the agent with that cutoff belief is indifferent between exploring at the two levels, so she gets a higher payoff than those with a slightly smaller prior, since the payoff is increasing whenever $\tau^e(p)$ is constant. Thus, it is clear that the total payoff is increasing in $p$.

Since an agent with a higher prior can expect to earn a higher payoff and hence utilize a higher ex ante credit line, she can offer higher side payments to the other agents than any other agent can offer. The result then follows.

\end{proof}

This result predicts that the contracts that will be seen in practice will be ones where the most optimistic agent acquires the payoff streams of the less optimistic ones. Consider a setting where firms are developing new technology socially. Their priors at the beginning of the exploration stage are informed by their research ex ante, which leads to different signals of how lucrative the technology happens to be. In cases where the most established firm can research the technology the best ex ante and hence have the most optimistic signal, the result above amounts to saying that the most established firm will ``buy out" smaller startups. Hence it provides a belief-centric explanation as to why these buy outs are seen in practice at all. In many standard settings, results like the \cite{milgrom1982information} no trade theorem and \cite{myerson1983efficient} impossibility theorem establish the impossibility of trade under the common prior even when agents have private information, precluding such buy outs from being rationalized.

\section{Discussion}

\textbf{Existence of non-cascade equilibria}: In this work, I focus on equilibria in which behaviour is similar in spirit to the cutoff strategies that are prevalent in earlier work. There will, however, be other interesting outcomes, produced by strategies with complicated behaviour. For example, agents can switch from the risky to the riskless action in continuous coordination, releasing information about their own draws early on in return for information for other agents' draws, so that they can match their action to the state sooner. These run the risk of making the agents pessimistic earlier than necessary, reducing equilibrium exploration and hence reducing overall welfare. Yet another class of equilibria can have mixing on the equilibrium path. This class of equilibria will have the property that agents may switch from playing the risky action to playing the safe action and then again switch back, without having gained any additional information about the state or the draws of others. Thus although interesting from a theoretical viewpoint, they exhibit dynamics that unlikely to be observed in practice.

\textbf{Welfare properties of cascade equilibria}: Cascade equilibria were informally motivated by the observation that they feature delayed information release, but a result that formalizes the positive implications of this would need to establish that cascade equilibria indeed have superior welfare or exploration properties. An objective of future work could be to formally establish that cascade equilibria indeed have superior welfare and exploration properties, similar in spirit to the results in \cite{horner2021}.

\newpage

\bibliographystyle{apalike}
\bibliography{main}

\newpage

\appendix

\section{Examples}

\subsection{Reversed order of cutoffs}\label{revorder}

The only time period to check for deviations is time $t=2$. For agent $2$, the differential benefit from playing the risky versus the safe action is given by:
    
\begin{equation*}
    p_2 \pi E_1  - (1-p_2) (1-\delta) E_0 - I_2
\end{equation*}

where $I_2 > 0$ is the information benefit that the agent gets by staying on the equilibrium path.

For agent $3$, the differential benefit from playing the risky versus the safe action is given by:

\begin{equation*}
    p_3 \pi E_1  - (1-p_3) (1-\delta) E_0 + I_3^p - I_3^d
\end{equation*}

where $I_3^p = \delta^3 p_3 [1-(1-\pi)^3]E_1$ is the information benefit that the agent gets by staying on the equilibrium path and $I_3^d = \delta^2 p_3 [1-(1-\pi)^2]E_1$ is the information benefit from deviating.

Note that given the other parameters, $\pi$ can be chosen such that $I_3^p - I_3^d > 0$. Then, there are priors close enough for the two agents such that 

\begin{equation*}
    p_3 \pi E_1  - (1-p_3) (1-\delta) E_0 + I_3^p - I_3^d > 0 > p_2 \pi E_1  - (1-p_2) (1-\delta) E_0 - I_2
\end{equation*}

Thus agent $2$ will play the safe action and agent $3$ will play the risky action.

\subsection{Exploring more with heterogenous priors}\label{hetexpapp}

We check for deviations at the key decision nodes for the players:

\begin{enumerate}
    \item Agent $1$ at $t=4$: the agent's on path payoff is given by:
    
    \begin{equation*}
        p_1 [(1-\delta)E_1 + \delta \pi E_1 + \delta (1-\pi) (1 - (1-\pi)^{11}) \delta^2 E_1 ] + (1-p_1)[-(1-\delta)E_0]
    \end{equation*}
    
    while if she deviates, she gets
    
    \begin{equation*}
        p_1 [(1 - (1-\pi)^8) \delta^2 E_1 ]
    \end{equation*}
    
    The difference between these two is 
    
    \begin{equation*}
        p_1 [(1-\delta)E_1 + \delta \pi E_1 ] + (1-p_1)[-(1-\delta)E_0] - p_1 \delta^2 E_1  [(1-\pi) (1 - (1-\pi)^{11}) \delta - (1-(1-\pi)^8)]
    \end{equation*}
    
    Note that $p_1 [(1-\delta)E_1 + \delta \pi E_1 ] + (1-p_1)[-(1-\delta)E_0 = 0$ is solved by the single-agent cutoff belief, so this term can be made arbitrarily small by choosing the initial belief of agent $1$ appropriately, independent of $\delta$ and $\pi$. We can also choose $\delta$ close enough to $1$ so that it is sufficient to check that
    
    \begin{equation*}
        (1-\pi) (1 - (1-\pi)^{11}) - (1-(1-\pi)^8) > 0
    \end{equation*}
    
    which is true when $\pi$ is small.
    
    \item Agents 2 and 3 at t=4: the on path payoff for agent 2 (and symmetrically, agent 3) at t=4 is:
    
    \begin{equation*}
        p_2 [(1-\delta^2)E_1 + \delta^3 (1-(1-\pi)^2) (1-\delta) E_1 + \delta^4 (1 - (1-\pi)^7) E_1 ] + (1-p_2)[-(1-\delta^2)E_0]
    \end{equation*}
    
    while if she deviates, she gets 
    
    \begin{equation*}
        p_2 [(1 - (1-\pi)^4) \delta^2 E_1 ]
    \end{equation*}
    
    For $\pi$ small, the deviation payoff is small but the on path payoff is still large since the two agents are two draws away from their single-agent cutoffs.
    
    \item Agents 2 and 3 at t=5: Both agents face their single-agent incentives at $t=5$ since they will get the information of the other agent at $t=6$ no matter what they do, so they will both choose to play the risky action. 
    
\end{enumerate}

\end{spacing}
\end{document}